\title{Did Fourier Really Meet M\"obius? Fast Subset Convolution via FFT} 
\titlerunning{Did M\"obius Really Meet Fourier? Fast Subset Convolution via FFT} %TODO optional, please use if title is longer than one line
\author{Mihail Stoian}{University of Technology Nuremberg, Germany\and \url{https://stoianmihail.github.io} }{mihail.stoian@utn.de}{https://orcid.org/0000-0002-8843-3374}{}%TODO mandatory, please use full name; only 1 author per \author macro; first two parameters are mandatory, other parameters can be empty. Please provide at least the name of the affiliation and the country. The full address is optional. Use additional curly braces to indicate the correct name splitting when the last name consists of multiple name parts.
\authorrunning{M. Stoian} %TODO mandatory. First: Use abbreviated first/middle names. Second (only in severe cases): Use first author plus 'et al.'
\keywords{fast subset convolution, dynamic programming, simple algorithms} %TODO mandatory; please add comma-separated list of keywords
\newcommand{\sparagraph}[1]{\vspace{1mm}\noindent {\bf #1}}
\newcommand{\ovee}{\mathbin{\mathpalette\make@circled\vee}}
\newcommand{\make@circled}[2]{%
  \ooalign{$\m@th#1\smallbigcirc{#1}$\cr\hidewidth$\m@th#1#2$\hidewidth\cr}%
}
\newcommand{\smallbigcirc}[1]{%
  \vcenter{\hbox{\scalebox{0.82}{$\m@th#1\bigcirc$}}}%
}
\newcommand{\subalign}[1]{%
  \vcenter{%
    \Let@ \restore@math@cr \default@tag
    \baselineskip\fontdimen10 \scriptfont\tw@
    \advance\baselineskip\fontdimen12 \scriptfont\tw@
    \lineskip\thr@@\fontdimen8 \scriptfont\thr@@
    \lineskiplimit\lineskip
    \ialign{\hfil$\m@th\scriptstyle##$&$\m@th\scriptstyle{}##$\hfil\crcr
      #1\crcr
    }%
  }%
}
\begin{document}

\maketitle

%TODO mandatory: add short abstract of the document
\begin{abstract}
In their seminal work on subset convolution, Bj\"orklund, Husfeldt, Kaski and Koivisto introduced the now well-known $O(2^n n^2)$-time evaluation of the subset convolution in the sum-product ring. This sparked a wave of remarkable results for fundamental problems, such as the minimum Steiner tree and the chromatic number. However, in spite of its theoretical improvement, large intermediate outputs and floating-point precision errors due to alternating addition and subtraction in its set function transforms make the algorithm unusable in practice.

We provide a simple FFT-based algorithm that completely eliminates the need for set function transforms and maintains the running time of the original algorithm. This makes it possible to take advantage of nearly sixty years of research on efficient FFT implementations.
\end{abstract}

\section{Introduction}\label{sec:introduction}

Subset convolution is one of the few examples of tools that need no introduction -- at least in exact or parameterized algorithms. As a sign for its importance, it even made it into the prominent book on Parameterized Algorithms~\cite{Cygan2015_chapter10}. Its popularity is also due to the suite of applications in which is used, such as the minimum Steiner tree, the chromatic number, and many more, even outside theoretical computer science (see Ref.~\cite{fsc} for further applications).

Namely, given $f$ and $g$ two set functions defined on the subset lattice of order $n$, their subset convolution on the sum-product ring is defined as
\begin{equation}
    (f \ast g)(S) = \displaystyle\sum_{T \subseteq S} f(T)g(S \setminus T),
    \label{eq:subset_conv}
\end{equation}
for any $S \subseteq [n]$. The na\"ive evaluation takes $O(3^n)$-time as for each cardinality we need to iterate all the subsets of that cardinality. In their seminal work, Bj\"orklund et al.~\cite{fsc} showed that there is a faster way: They introduced an $O(2^n n^2)$-time algorithm, using the zeta and M\"obius transform, a faster algorithm for these  being already due to Yates~\cite{yates_algo}.

\sparagraph{Motivation.} As it stands now, the algorithm is far from being used smoothly in practice. For instance, Kohonen and Corander~\cite{kohonen2016computing} refrained from using it in their application, and instead used the na\"ive evaluation. Their reason was rounding errors in floating-point arithmetic, due to alternating additions and subtractions in the M\"obius transform. Another problem we encountered ourselves was the magnitude of the values in the intermediate outputs, which are themselves zeta transforms of the final convolution. In contrast, the well-known $O(n \log n)$-time algorithm of the fast Fourier transform (FFT)~\cite{cooley1965algorithm} is well-understood and has witnessed a rich line of research on practical implementations, e.g., the FFTW library~\cite{frigo1999fast}; see Ref.~\cite{fft_survey_1} for a recent survey.

\sparagraph{Contribution.} We provide a simple algorithm for the subset convolution in the $(+, \times)$-ring that does not rely on set function transforms. Notably, our FFT-based algorithm maintains the running time of $O(2^n n^2)$ of the original algorithm proposed by Bj\"orklund et al.~\cite{fsc}. This makes it possible to take advantage of nearly sixty years of research on efficient FFT implementations.

\section{Algorithm}

Most importantly, our algorithm eliminates the need for set function transforms. Instead, we interpret the set functions as vectors indexed by the natural binary representation of subsets, and transform them via FFT. We outline the pseudocode in Alg.~\ref{alg:subset_conv_via_fft}.

\begin{algorithm}
    \caption{$\textsc{SubsetConvolutionViaFFT}(f,g)$}
	\label{alg:subset_conv_via_fft}
\begin{algorithmic}[1]
    \State $\{\hat f^{(0)}, \ldots, \hat f^{(n)}\} \gets \textsc{FFT}(\textsc{Chop}(f))$
    \State $\{\hat g^{(0)}, \ldots, \hat g^{(n)}\} \gets \textsc{FFT}(\textsc{Chop}(g))$
    \State $h^{(k)} \gets \textsc{IFFT}(\sum_{i=0}^{k} \hat f^{(i)}\hat g^{(k - i)})$, for any $0 \leq k \leq n$.
    \State $h^{(k)}[\{S \subseteq [n] \mid |S| \neq k\}] \gets 0$, for any $0 \leq k \leq n$.
    \State \Return $\sum_{k=0}^{n} h^{(k)}$
\end{algorithmic}
\end{algorithm}

Let us analyze the algorithm. First, we ``chop'' the set functions into $n + 1$ set functions and transform them via FFT.\footnote{In the pseudocode, $\textsc{FFT}(\cdot)$ maps the ``chopped'' functions to their FFT counterpart.} A chopped function $f^{(i)}$ is a copy of the original function $f$ that takes values only for sets $S \subseteq [n]$ with $|S| = i$; the rest is set to 0. This is similar to  ranked set functions in Bj\"orklund et al.~\cite{fsc}. Most importantly, we do \emph{not} transform these via the zeta transform, but rather directly via FFT. Next, we fix a cardinality $0 \leq k \leq n$ and compute the convolution in the frequency domain. This step resembles the ranked convolution in Ref.~\cite{fsc}. We can then apply the inverse FFT on this output. At this step, the previous algorithm would have applied the M\"obius transform.\footnote{The zeta and M\"obius transform are the inverse of each other.} At this point, the ``vanilla'' $h^{(k)}$'s in line~3 may have values for sets of cardinality other than $k$. These need to be set to~0 (line~4) before we assemble the final output (line 5).
\begin{theorem}
    Algorithm~\ref{alg:subset_conv_via_fft} correctly computes the subset convolution in the $(+, \times)$-ring.
\end{theorem}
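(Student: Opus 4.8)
The plan is to reduce the subset convolution to an ordinary integer (cyclic) convolution, which is exactly what an FFT computes, and then to argue that the cardinality filter in line~4 isolates precisely the terms of the subset convolution. The starting point is the elementary observation that if $T \cap U = \emptyset$ and $T \cup U = S$, then the binary integers encoding $T$ and $U$ have disjoint bit patterns, so adding them triggers no carries and $\mathrm{int}(T) + \mathrm{int}(U) = \mathrm{int}(S)$. Hence every term $f(T)\,g(S\setminus T)$ contributing to $(f \ast g)(S)$ corresponds to a pair of integers that sum \emph{carry-free} to $\mathrm{int}(S)$; this is what allows an additive convolution to ``see'' the lattice convolution. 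Throughout I write $|t|$ for the number of $1$-bits of an integer $t$, which coincides with the cardinality of the subset it encodes.

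First I would make line~3 precise. Since the FFT/IFFT pair acts on vectors of length $2^n$, pointwise multiplication in the frequency domain realises the length-$2^n$ cyclic convolution, so the output of line~3 is
\begin{equation}
    h^{(k)}(s) = \sum_{i=0}^{k} \sum_{\substack{t + u \equiv s \pmod{2^n} \\ |t| = i,\ |u| = k - i}} f(t)\,g(u).
\end{equation}
The claim to establish is that after line~4 zeroes all entries with $|s| \neq k$, the surviving value $h^{(k)}(S)$ at every $S$ with $|S| = k$ equals $(f \ast g)(S)$.

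The key tool is the carry count of binary addition (Kummer's theorem for $p = 2$): for nonnegative integers $t, u$,
\begin{equation}
    |t| + |u| = |t + u| + c(t,u),
\end{equation}
where $c(t,u) \geq 0$ is the number of carries produced when adding $t$ and $u$ in base two, and $c(t,u) = 0$ exactly when $t$ and $u$ have disjoint bit patterns. I would then split the inner sum according to whether wraparound occurs. In the non-wrapping case $t + u = s$, every contributing pair has $|t| + |u| = k$, so $|s| = k - c(t,u)$; such a term reaches a position with $|s| = k$ precisely when $c(t,u) = 0$, i.e.\ when $t,u$ are disjoint and encode a partition $S = T \sqcup U$. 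In the wrapping case $t + u = s + 2^n$ we have $|t+u| = |s| + 1$ and at least one carry (the one out of the top bit), so $|s| = k - 1 - c(t,u) \leq k - 2 < k$; these terms never land on a cardinality-$k$ position and are discarded in line~4. Combining the cases, for $|S| = k$,
\begin{equation}
    h^{(k)}(S) = \sum_{i=0}^{k} \sum_{\substack{T \sqcup U = S \\ |T| = i}} f(T)\,g(U) = \sum_{T \subseteq S} f(T)\,g(S \setminus T) = (f \ast g)(S),
\end{equation}
and since the filtered $h^{(k)}$'s have pairwise disjoint supports, summing them over $k$ in line~5 reassembles $f \ast g$ on all of $2^{[n]}$.

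The main obstacle I anticipate is the cyclic wraparound of the plain FFT: a priori one worries that terms with $t + u \geq 2^n$ pollute the output. The crux of the argument is therefore the carry-count bound above, which shows that every wrapped term is forced to a strictly smaller popcount and is automatically annihilated by the cardinality filter, so that no explicit zero-padding is required. The rest is routine bookkeeping: that line~3 gathers exactly the splits $i + (k-i) = k$, and that the supports of the $h^{(k)}$ after line~4 are disjoint across cardinalities.
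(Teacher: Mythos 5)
Your proof is correct and follows essentially the same route as the paper's: interpret the pointwise product in the frequency domain as an integer convolution and observe that carries in binary addition strictly decrease the popcount, so only disjoint pairs $T \sqcup U = S$ survive the cardinality filter in line~4. Your write-up is in fact more complete than the paper's, since you explicitly verify via the carry count that cyclically wrapped terms (those with $t + u \geq 2^n$) land at popcount at most $k-2$ and are therefore also annihilated by the filter --- a point the paper's proof passes over in silence.
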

\begin{proof}
    The fine work goes into line 3 of the algorithm. First, note that this step is equivalent to computing $\sum_{i=0}^{k} \textsc{IFFT}(\hat f^{(i)}, \hat g^{(k - i)})$, due to the linearity of the (inverse) Fourier transform. Since we remove the contribution of sets with cardinality different $k$ in the next line, we only need to analyze the correctness of the values of those set $S \subseteq [n]$ with $|S| = k$. To this end, we use the following equivalence for two integers $a, b \in [2^n]$:\[
        a + b = (a\:|\:b) + (a\:\&\:b),
    \]
    where ``$|$'' and ``$\&$'' are the well-known bit operations, which have the natural interpretation in the set operations ``$\cup$'' and ``$\cap$'', respectively. Thus, multiplying $\hat f^{(i)}$ and $\hat g^{(k-i)}$ enforces that the sets to be multiplied are \emph{disjoint} since their cardinalities sum up to $k$. In other words, their ``$\&$''-result is 0, i.e., the empty set. Summing over all possible cardinalities $0 \leq i \leq k$ ensures that all subsets are considered, analogous to Bj\"orklund et al.~\cite{fsc}.
\end{proof}

\sparagraph{Running Time.} The chopped functions can be computed in $O(2^n n)$-time. Applying FFT on these takes time $O(n \cdot 2^n \log 2^n) = O(2^n n^2)$. For any $0 \leq k \leq n$, the evaluation of $h^{(k)}$ can be done in time $O(2^n n + 2^n \log 2^n) = O(2^n n)$, hence $O(2^n n^2)$ for this step. Finally, ``cleaning up'' the $h^{(k)}$'s and summing them runs in $O(2^n n)$-time. The total running time thus reads $O(2^n n^2)$, maintaining the running time of the original algorithm~\cite{fsc}. On a practical note, the input does not need any zero-padding as the size itself is always a power of two.

\section{Discussion}

\sparagraph{Implications.}~While our work shows that subset convolution does not require any specialized set function transforms, this does not make them redundant: They are relevant by themselves in order theory~\cite{bjorklund2015fast, chaveroche2021focal, kaski2016fast, pegolotti2022fast}, Bayesian networks~\cite{parviainen2010bayesian}, and in reinterpreting dynamic programming recursions of well-known problems to obtain polynomial-space algorithms~\cite{nederlof2009fast}. Beyond simplicity, our work enables the use of efficient FFT libraries. This is particularly relevant for practical implementations of dynamic programs on tree decompositions~\cite{van2009dynamic}, which are widely used in algorithm engineering; see, for instance, the PACE 2018 challenge~\cite{pace_2018}.

\sparagraph{Related Work.}~To ensure that our algorithm has not appeared as a by-product of a previous work, we have examined the articles citing Bj\"orklund et al.~\cite{fsc} that contain the terms ``Fourier'' and ``FFT'' in their text.\footnote{This is easily done using Google Scholar's feature of searching for a pattern within citing articles.} In particular, Cygan and Pilipczuk's work on exact and approximate bandwidth~\cite{cygan2010exact} comes the closest: They solve the \textsc{Disjoint Set Sum} problem using both the standard fast subset convolution algorithm and a custom FFT solution (see their Sec.~2.2). While they do not solve the fast subset convolution via FFT, their algorithm for this particular problem bears similarity with ours. Interestingly enough, at a closer inspection, we can even improve their $O(2^n n^3)$-time algorithm for the \textsc{Disjoint Set Sum} problem by a linear factor, since they do not exploit the linearity of the (inverse) Fourier transform and simply perform $n^2$ sequence convolutions.

On a final note, our work further ``sinks'' the recently uncovered algorithmic isthmus between sequence convolution and subset convolution~\cite{stoian2024sinking}.
\bibliography{simple_fsc}

\end{document}